\documentclass[10pt]{article} 
\usepackage{amsmath,amsfonts,latexsym,enumitem}
\usepackage[noend]{algpseudocode}
\usepackage{xcolor,algorithm,multirow,lscape,graphicx}	
\usepackage{tikz}
\usepackage{pgfgantt}
\usepackage{hyperref}
\usepackage[capitalise,noabbrev]{cleveref}
\normalsize
\newtheorem{theorem}{Theorem}
\newtheorem{lemma}{Lemma}
\newtheorem{corollary}{Corollary}
\newtheorem{proposition}{Proposition}

\newtheorem{definition}{Definition}

\newenvironment{proof}{{\sc Proof. }}{\hfill$\Box$\vspace{0.1in}}
\newcommand{\OPT}{{\rm OPT}}
\newcommand{\pw}{{\rm pw}}
\newcommand{\tw}{{\rm tw}}

\newcommand{\Qge}{\mathbb{Q}_{\ge 0}}
\newcommand{\mcA}{\mathcal{A}}
\newcommand{\mcB}{\mathcal{B}}

\title{Layering and reverse extension for weighted $k$-path vertex cover}
\author{
Jiajing~Wu\thanks{School of Mathematics, Hangzhou Dianzi University. Hangzhou, China.
	Emails: \texttt{\{265070205,anzhang,90077\}@hdu.edu.cn}}
\and
An~Zhang$^*$\thanks{Corresponding authors.}
\and
Shiming~Li$^*$
\and
Guohui~Lin\thanks{Department of Computing Science, University of Alberta.  Edmonton, Canada.
	Email: \texttt{guohui@ualberta.ca}} $^{\dagger}$
}
\date{}
\begin{document}
\maketitle
\begin{abstract}
We study the weighted $k$-path vertex cover problem for any fixed integer $k \ge 2$,
in which a vertex non-negatively weighted simple graph $G = (V, E)$ is given and one wants to find a minimum-weight vertex subset $C \subseteq V$
such that every simple path on $k$ vertices in $G$ contains at least one vertex of $C$.
The problem generalizes the classic weighted vertex cover problem, that is, $k = 2$, and it admits a $k$-approximation algorithm of Hochbaum~[SIAM J. Comput., 1982]
and a $(k - \frac 12)$-approximation algorithm of Br\"ustle et al.~[IWOCA 2021].
In this paper, we carefully design local weight functions in the Layering and Reverse Extension local ratio framework of Bar-Yehuda et al.~[ACM Comput. Surv., 2004],
leading to an approximation algorithm of ratio $\frac {3k-1}4$ for odd $k \ge 3$ and of ratio $\frac {3k}4$ for even $k \ge 4$.
When $k = 3$ and $k = 4$, these ratios are $2$ and $3$, respectively,
matching the state-of-the-art approximation ratios for the corresponding special cases.

\paragraph{Keywords:}
$k$-path vertex cover; approximation algorithm; local ratio; pathwidth; dynamic programming 
\end{abstract}

\subsection*{Acknowledgement}
This work is supported by the NNSF of China (Grants 12371316, 92567201),
the PNSF of Zhejiang China (Grant LZ25A010001), and
the NSERC Canada.

\newpage
\section{Introduction}\label{sec:intro}
Throughout the paper, we consider a vertex non-negatively weighted simple graph $G = (V, E)$.
In the special case where the vertices have the uniform weight, the graph is also called {\em unweighted}.
Unless otherwise explicitly specified, the graphs and the problems are vertex-weighted.

For an integer $k \ge 2$, a {\em $k$-path} in the graph $G$ is a simple path on $k$ vertices.
A vertex subset $C \subseteq V$ is a {\em $k$-path vertex cover}, denoted as $k$PVC, if every $k$-path in $G$ contains at least a vertex of $C$,
or equivalently, the subgraph of $G$ obtained by removing the vertices of $C$ and those edges incident at the vertices of $C$, i.e., $G[V \setminus C]$, does not contain any $k$-path.
In the latter sense, $C$ is also called a {\em deletion set} for convenience.
One sees that when $k = 2$, a $2$-path is an edge, and a $2$PVC is a classic vertex cover.

The weighted $k$PVC problem is to find a $k$PVC $C$ to minimize the total weight of the vertices in $C$, i.e., $w(C) = \sum_{v \in C} w(v)$,
where $w(\cdot)$ is the given nonnegative vertex weight function for the input graph.
The problem on unweighted graphs was first introduced by Bre\v{s}ar et al.~\cite{BKK11},
and later the weighted version was first studied by Bre\v{s}ar et al.~\cite{BKS14},
which generalize the classic vertex cover problem and its weighted variant, respectively.
In this paper, we study the weighted $k$PVC problem for any fixed integer $k \ge 3$, from the perspective of approximation algorithms.

\subsection{Previous work}
The classic vertex cover, or $2$PVC, problem and its weighted variant are NP-hard~\cite{GJ79} and APX-hard~\cite{AK00}, even on cubic graphs.
They admit a $2$-approximation algorithm~\cite{BE81} but, assuming the Unique Games Conjecture (UGC), are not approximable within $2 - \epsilon$ for any positive $\epsilon$~\cite{KR08}.
In the pioneering work of Bre\v{s}ar et al.~\cite{BKK11}, unweighted $k$PVC is shown NP-hard, and consequently the weighted variant is NP-hard too, for every $k \ge 3$.

On the approximability, given a vertex-weighted graph $G = (V, E)$,
one can construct an instance of the {\em weighted} set cover problem in which every $k$-path of $G$ becomes an ``{\em element}'' and
every vertex of $V$ becomes a set that ``{\em covers}'' all the $k$-paths passing it and has a weight.
In particular, one observes that in such an instance, every element occurs in exactly $k$ sets (typically the number of sets is called {\em frequency}).
Therefore, Bar-Yehuda and Even's~\cite{BE81} and Hochbaum's~\cite{Hoc82} algorithms for the low-frequency weighted set cover problem can be employed, 
to become the first $k$-approximation algorithm for the weighted $k$PVC problem.
Since its first formal study~\cite{BKS14} in 2014, the only improvement was made in 2021 by Br\"ustle et al.~\cite{BEH21},
who presented a $(k - \tfrac 12)$-approximation algorithm.

For the unweighted $k$PVC problem, Lee presents an $O(\log k)$-approximation algorithm~\cite{Lee19}.

On the inapproximability side,
Bre\v{s}ar et al. showed that an $\alpha$-approximation algorithm for the unweighted $k$PVC problem
yields an $\alpha$-approximation algorithm for the classic vertex cover problem~\cite{BKK11}.
Therefore, assuming the UGC, the unweighted, and consequently the weighted,
$k$PVC problem is not approximable within $2 - \epsilon$ for any positive $\epsilon$.

Besides the above general results,
there are also studies on some specific low values of $k$ and on some restricted graph classes.
For instance, Tu and Zhou gave two $2$-approximation algorithms for weighted $3$PVC, based on the primal-dual method and the {\em layering} method~\cite{TZ11a,TZ11b}, respectively.
For weighted $4$PVC, Camby et al. gave a $3$-approximation algorithm~\cite{CCC14}.
These are the state-of-the-art results for $k = 3$ and $4$, respectively.

When the graphs are unweighted and regular, better constant approximation ratios are known for $k \ge 3$~\cite{RSU15,ZCC20}.
For unweighted $3$PVC, a {\em polynomial-time approximation scheme} (PTAS) is known for planar graphs~\cite{TS19},
and a PTAS is known for bounded-heterogeneity ball graphs~\cite{ZLS17}.

Exact polynomial-time algorithms are also known for several restricted graph classes.
Bre\v{s}ar et al.~\cite{BKK11} showed that unweighted $k$PVC is solvable in linear time for trees;
Tu~\cite{Tu17} gave an exact polynomial-time algorithm for cactus graphs.
For weighted $k$PVC, Bre\v{s}ar et al.~\cite{BKS14} gave exact polynomial-time algorithms for complete graphs, paths, cycles, and trees;
in particular, their algorithm for trees runs in $O(k |V|)$ time.
When $k = 3$, since unweighted $3$PVC is dual to the {\em maximum dissociation set} problem,
the results of Orlovich et al.~\cite{ODF11} imply the polynomial-time solvability of unweighted $3$PVC on chordal graphs,
weakly chordal graphs, interval-filament graphs, and AT-free graphs.
Bre\v{s}ar et al.~\cite{BKK19} also gave a linear-time algorithm for unweighted $3$PVC on $4$-path-tidy graphs%
\footnote{A partner of an induced $4$-path in a graph is a vertex outside the path such that the induced subgraph on these five vertices contains at least two $4$-paths.
A graph is $4$-path-tidy if every induced $4$-path has at most one partner~\cite{GRT97}.}
Further exact, approximation, and parameterized algorithmic results have been summarized in a survey~\cite{Tu22}.

\subsection{Our contribution}
We study the weighted $k$PVC problem for any fixed integer $k \ge 3$, from the approximation algorithm perspective.
We adopt the Layering and Reverse Extension local ratio framework of Bar-Yehuda et al.~\cite{BBF04},
and carefully design local cost functions, leading to an approximation algorithm of ratio $\frac {3k-1}4$ for odd $k \ge 3$ and of ratio $\frac {3k}4$ for even $k \ge 4$.
When $k = 3$ and $k = 4$, these ratios are $2$ and $3$, respectively,
matching the state-of-the-art approximation ratios for the corresponding special cases.

\subsubsection{The algorithm framework}
The Layering and Reverse Extension framework has a forward phase of recursive layering and a backward phase of solution extension,
with its performance analyzed by the local ratio technique~\cite{BBF04}.
During the forward layering phase,
we abuse $G = (V, E)$ to denote the current remaining graph in which every vertex of $V$ has a positive remaining weight.
In the current iteration, a vertex subset $U \subseteq V$ is located such that the induced subgraph on $U$, $G[U]$,
contains a spanning subgraph isomorphic to a carefully designed structure $H$, which in our case is a maximal-arm spider (in \Cref{def02}).
Then, with respect to a local cost function designed for the vertices of $U$,
a scaling factor is found and the vertices of $U$ of the scaled weights form a layer, which is subtracted from the graph $G$ to give rise to at least one zero-weighted vertex.
The zero-weighted vertices are then removed, and layering continues into the next iteration.

When an iteration is reached in which the remaining graph $G$ can be solved exactly, which in our case is spider-free,
a {\em minimal} solution is obtained and the second reverse extension phase starts to restore the removed vertices in the reverse order of layering iterations,
and at the end returns a $k$PVC for the original graph.

For the performance analysis, the designed local cost function for layering is used to bound the weight of the achieved solution in each layer
with respect to the optimum local to that layer, following the good-local-cost formulation of Aprile et al.~\cite{ADF23}.

\subsubsection{The main result and paper organization}
\begin{theorem}
\label{thm01}
For every fixed integer $k \ge 3$, the weighted {\rm $k$-Path Vertex Cover} problem admits a polynomial-time $\rho_k$-approximation algorithm, where
\[
\rho_k = \begin{cases}
	\frac {3k - 1}4, & k \mbox{ is odd};\\
	\frac {3k}4,     & k \mbox{ is even}.
	\end{cases}
\]
\end{theorem}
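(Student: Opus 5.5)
We will prove \cref{thm01} inside the Layering and Reverse Extension local ratio framework of Bar-Yehuda et al., using the good-local-cost formulation of Aprile et al. The key object is a \emph{spider}: a center vertex with several vertex-disjoint arms (paths) attached. We take $H$ to be a maximal-arm spider in which every $k$-path is forced through a small set of positions. Its arms should have length about $\lfloor (k-1)/2\rfloor$, so that any two arms plus the center already form a $k$-path (or nearly so, depending on parity).

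The forward phase runs as follows. While the current graph $G$ contains such a spider, we locate a vertex set $U$ with $G[U]\supseteq H$ spanning. We assign a local cost function $c_U$: a larger cost to the center and to arm vertices near the center, and smaller costs further out. We subtract $\varepsilon c_U$, with $\varepsilon$ maximal subject to nonnegativity, and delete the zero-weight vertices. When $G$ becomes spider-free, its structure is restricted (every component has bounded pathwidth, e.g.\ depth at most $k$ in a DFS tree). We then solve it exactly by dynamic programming over a path/tree decomposition, since $k$ is fixed, and keep the solution minimal. In the reverse phase, we re-add the removed vertices layer by layer and add a vertex only if it is needed to destroy a $k$-path, keeping the cover minimal with respect to each layer.

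By the local ratio theorem, it suffices to show that for each layer the function $c_U$ is $\rho_k$-good. This means two bounds must hold:
\begin{itemize}
\item every feasible $k$PVC of the layer's graph has $c_U$-weight at least some $\beta$;
\item every cover that is minimal, in the reverse-extension sense, has $c_U$-weight at most $\rho_k\beta$.
\end{itemize}

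The lower bound uses the spider. Any $k$PVC must hit the $k$-paths formed by pairs of arms through the center, so either it contains the center or it hits all but one arm at sufficiently central positions. This gives $\beta$ as the minimum over these options.

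For the upper bound, we show that a minimal solution can only contain vertices of $U$ whose removal is justified by some $k$-path. Maximality of the arms, meaning that no arm can be extended outside $U$, limits how many vertices of each arm can be essential. The worst case is essentially all vertices on one or two arms plus the center. Choosing the weights (e.g.\ $c_U$ equal to $2$ on the center and $1$ on the nearest $\lceil k/2\rceil$-ish arm vertices, tuned by parity) then balances this worst case against $\beta$ to give ratio $(3k-1)/4$ for odd $k$ and $3k/4$ for even $k$.

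The main obstacle is this upper-bound step. We must characterize which vertices of $U$ a minimal $k$PVC may contain, given that outside vertices can create $k$-paths through $U$ in many ways, and then tune $c_U$ so the ratio is exactly $\rho_k$. We would first do the case analysis for a spider with exactly three arms, then argue that extra arms only help the ratio. We would also double-check $k=3,4$ against the known ratios $2$ and $3$.
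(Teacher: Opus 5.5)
Your framework (maximal-arm spider layers, exact DP on the spider-free residual graph, reverse extension, local ratio) matches the paper's. However, the step you flag as the main obstacle is left open, and the route you sketch for it rests on a false claim: maximality does \emph{not} limit how many arm vertices a minimal $k$PVC can contain. Take $k=5$ and $H=S(2,2,2)$ with center $c$ and arms $c a_1 a_2$, $c b_1 b_2$, $c d_1 d_2$. Attach to each arm vertex a private pendant path on $4$ new vertices. The only neighbours of $c$ are $a_1,b_1,d_1$, so no new arm exists and $H$ is maximal. Yet the six arm vertices form a minimal $5$PVC: each has its pendant as a witness path, and together they fill all three arms. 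So the worst case is not ``one or two arms plus the center''.

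What maximality actually gives is weaker, but it is exactly what is needed. Write $k=2m+1$ or $k=2m$. A $k$-path through $c$ has at least $m$ vertices on one side of $c$, and these would form a new arm unless the path meets an arm vertex. Hence a cover containing all arm vertices does not need $c$. So every minimal $k$PVC of the current graph misses at least one vertex of $H$, which gives $c_H(C\cap V(H))\le c_H(V(H))-1$ whenever all local costs are at least $1$.

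This upper bound in turn dictates the cost function, which you leave untuned. Put cost $1$ on every arm vertex and $p-1$ on the center, $p$ being the number of arms. Any cover avoiding $c$ must hit $p-1$ arms, so $\OPT(H,c_H)\ge p-1$. The ratio is then at most $\frac{pm+p-2}{p-1}=m+1+\frac{m-1}{p-1}$ for odd $k$ and $\frac{pm+p-3}{p-1}=m+1+\frac{m-2}{p-1}$ for even $k$. Both are maximized at $p=3$, where they equal $\rho_k$.

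Your other choices fail against this. With your suggested center cost $2$, the pendant example with $p$ arms has ratio roughly $pm/2$, unbounded in $p$, so ``extra arms only help'' is false for that choice. Tapering the costs toward the arm ends only lowers the lower bound, since inside $H$ a cover may hit each arm at its cheapest vertex, while a minimal cover can still pay for whole arms. The arm lengths must also be exact: length $m$, except one arm of length $m-1$ in the initial spider when $k=2m$. This ensures that any two arms plus $c$ contain a $k$-path and that new arms have $m$ non-center vertices.

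Finally, spider-free graphs need not have DFS depth at most $k$; a long path is spider-free. Bounded pathwidth instead comes from the excluded-forest-minor theorem applied to the tree $T_k$. This uses that a subcubic minor is a topological minor and that every subdivision of $T_k$ contains $T_k$. The DP must also track partial $k$-paths running through already-forgotten vertices.
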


We introduce some key notations and the spider structures in Section~\ref{sec01}.
Section~\ref{sec02} presents a polynomial-time dynamic programming exact algorithm for spider-free graphs, that is,
the input graph does not contain the defined spider structure as a subgraph.
The main algorithm of Layering and Reverse Extension, together with the local-ratio analysis and the running time analysis, are in Section~\ref{sec03}.
Section~\ref{sec04} concludes the paper.

\section{Preliminaries}\label{sec01}
All graphs are finite, simple, and undirected.
Given a graph $G$ and its nonnegative vertex weight function $w: V(G) \to \Qge$ and $C \subseteq V(G)$, let $w(C) = \sum_{v \in C} w(v)$ for any vertex subset $C$, and let
\[
\OPT(G, w) = \min\{w(C): C \mbox{ is a $k$PVC of } G\}.
\]

\begin{definition}[Three-arm spider]
\label{def01}
For three positive integers $a$, $b$, and $c$, let $S(a, b, c)$ be the tree obtained from three paths of lengths $a, b, c$, respectively, by identifying one endpoint of each path.
The identified vertex is the {\em center};
the three paths are the {\em arms} and the vertices other than the center are arm-vertices.

Define the three-arm spider
\begin{equation}
\label{eq01}
  T_k =
  \begin{cases}
    S(m, m, m),     & k = 2m + 1;\\
    S(m - 1, m, m), & k = 2m.
  \end{cases}
\end{equation}
\end{definition}

The number of vertices of $T_k$ is
\begin{equation}
\label{eq02}
  |V(T_k)| =
  \begin{cases}
    1 + 3m = \dfrac {3k - 1}2,          & k \mbox{ odd;}\\
    1 + (m - 1) + m + m = \dfrac {3k}2, & k \mbox{ even.}
  \end{cases}
\end{equation}
If a graph contains no $T_k$ as a subgraph, then we call it {\em $T_k$-free} or {\em spider-free}.

Given a three-arm spider, an $(m+1)$-path (i.e., length-$m$ path) ending at $c$ and containing no arm-vertex of the spider
can be added to become a new arm to make it into a four-arm spider;
a four-arm spider can continue to grow into a five-arm spider, and so on.

\begin{definition}[Maximal-arm spider]
\label{def02}
Starting with a three-arm spider $T_k$ re-written as $\mcA = \{c, Q_1, Q_2, Q_3\}$, where $c$ is the center and $Q_1, Q_2, Q_3$ are the three arms,
repeatedly find an $(m+1)$-path ending at $c$ and containing no arm-vertex of $\mcA$, and then append it as a new arm to $\mcA$.
Upon no new arm can be found, the resultant $\mcA = \{c, Q_1, Q_2, \dots, Q_p\}$, where $p \ge 3$, is called a {\em maximal-arm} spider.
See for an illustration in \Cref{fig01}.
\end{definition}
We remark that a maximal-arm spider is greedily grown from a three-arm spider, and thus it is not necessarily unique,
neither necessarily has a unique number of arms, but all depending on what the new arms are and how they are discovered.

\begin{figure}[ht]
\centering\scalebox{1.0}{
  \setlength{\unitlength}{1bp}%
  \begin{picture}(261.03, 110.13)(0,0)
  \put(0,0){\includegraphics{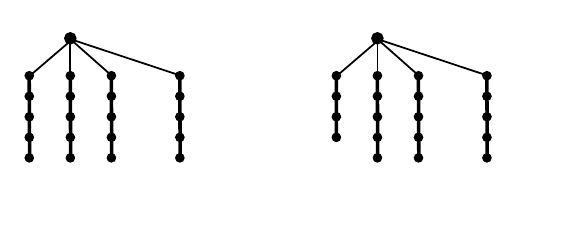}}
  \put(35.23,95.57){\fontsize{11.38}{13.66}\selectfont $c$}
  \put(56.72,53.19){\fontsize{11.38}{13.66}\selectfont $\dots$}
  \put(182.64,95.57){\fontsize{11.38}{13.66}\selectfont $c$}
  \put(204.12,53.19){\fontsize{11.38}{13.66}\selectfont $\dots$}
  \put(5.67,26.58){\fontsize{11.38}{13.66}\selectfont $m$}
  \put(31.95,26.58){\fontsize{11.38}{13.66}\selectfont $m$}
  \put(54.95,26.58){\fontsize{11.38}{13.66}\selectfont $m$}
  \put(87.80,26.58){\fontsize{11.38}{13.66}\selectfont $m$}
  \put(149.38,36.44){\fontsize{11.38}{13.66}\selectfont $m-1$}
  \put(179.35,26.58){\fontsize{11.38}{13.66}\selectfont $m$}
  \put(202.35,26.58){\fontsize{11.38}{13.66}\selectfont $m$}
  \put(235.20,26.58){\fontsize{11.38}{13.66}\selectfont $m$}
  \put(5.67,8.12){\fontsize{11.38}{13.66}\selectfont $k$ odd: $m = \tfrac {k-1}2$}
  \put(157.09,8.12){\fontsize{11.38}{13.66}\selectfont $k$ even: $m = \tfrac k2$}
  \end{picture}%
}
\caption{An illustration of a maximal-arm spider $\mcA = \{c, Q_1, Q_2, \dots, Q_p\}$, where $p \ge 3$, when $k$ is odd (left, where $m = \frac {k-1}2$)
	and when $k$ is even (right, where $m = \frac k2$).\label{fig01}}
\end{figure}

In our Layering and Reverse Extension algorithm for the weighted $k$PVC problem,
we will use maximal-arm spiders to construct layers.
The next two lemmas give the first idea on why they are important to the approximation ratio we want to prove.

\begin{lemma}
\label{lemma01}
For a $p$-arm spider $\mcA = \{c, Q_1, Q_2, \dots, Q_p\}$, if the center $c$ is not included in a $k$PVC $C$, then $C$ contains at least one vertex from each of $p-1$ arms.
\end{lemma}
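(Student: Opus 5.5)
We argue by contradiction. Suppose $c \notin C$ and there are two distinct arms, say $Q_i$ and $Q_j$ with $i \ne j$, that contain no vertex of $C$. We will build a $k$-path in $G$ that avoids $C$; this contradicts $C$ being a $k$PVC. Hence at most one arm can be disjoint from $C$, so $C$ meets at least $p-1$ arms.

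To build the path, first record the arm lengths. In a $p$-arm spider $\mcA$, every arm added during the growth of Definition~\ref{def02} is a length-$m$ path ending at $c$, so it has $m$ arm-vertices. The three initial arms come from $T_k$ in~\eqref{eq01}. For odd $k = 2m+1$ all arms have $m$ arm-vertices. For even $k = 2m$ exactly one arm, the initial one, has $m-1$ arm-vertices and all others have $m$. Different arms share only the center $c$: this holds in $T_k$, and each new arm contains no arm-vertex of the current spider. Therefore $Q_i \cup \{c\} \cup Q_j$ traced from the far end of $Q_i$ through $c$ to the far end of $Q_j$ is a simple path in $G$, since all its edges are edges of the spider, which is a subgraph of $G$. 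It contains no vertex of $C$.

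Next we count its vertices. For odd $k$ the path has $m + 1 + m = 2m+1 = k$ vertices. For even $k$ at most one of $Q_i, Q_j$ is the short arm, so the path has at least $(m-1) + 1 + m = 2m = k$ vertices. In either case the path has at least $k$ vertices, so it contains a $k$-path, for instance a prefix, and that $k$-path avoids $C$. This is the desired contradiction.

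The only point needing care is that arms are internally vertex-disjoint and meet exactly at $c$, so that the concatenation is a simple path. This follows directly from the construction in Definition~\ref{def02}.
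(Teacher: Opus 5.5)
Your proof is correct and follows the paper's argument: two arms avoiding $C$, joined at the uncovered center $c$, form a path on at least $k$ vertices that $C$ misses, which is a contradiction. You also spell out details the paper leaves implicit, namely the arm-length count (in the even case the path may have $k+1$ vertices, so you take a $k$-vertex subpath) and the fact that distinct arms meet only at $c$.
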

\begin{proof}
If $C$ does not contain any vertex from two arms, then these two arms together with the center $c$ form a $k$-path, a contradiction.
\end{proof}

\begin{lemma}
\label{lemma02}
If all the vertices of a maximal-arm spider $\mcA$ are positively weighted,
then every $k$PVC $C$ of the graph $G$ containing all the vertices of $\mcA$ can be reduced by removing at least one vertex of $\mcA$.
\end{lemma}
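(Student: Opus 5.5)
The plan is to show that the center $c$ of $\mcA$ is the vertex that can always be removed. That is, for every $k$PVC $C$ of $G$ with $V(\mcA) \subseteq C$, the set $C \setminus \{c\}$ is still a $k$PVC. Since $w(c) > 0$, this strictly reduces the weight of $C$, which is what the statement requires. The argument is by contradiction and uses only the maximality condition in \Cref{def02}.

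Suppose $C \setminus \{c\}$ is not a $k$PVC. Then some $k$-path $P$ in $G$ contains no vertex of $C \setminus \{c\}$. Since $C$ itself is a $k$PVC, $P$ must contain $c$. Every arm-vertex of $\mcA$ lies in $C \setminus \{c\}$, so $P$ contains no arm-vertex of $\mcA$.

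Next I split $P$ at $c$. Removing $c$ from $P$ leaves two (possibly empty) subpaths. Together they have $k-1$ vertices, and each is attached to $c$ at one end. If $k = 2m+1$, then $k - 1 = 2m$, so one side has at least $m$ vertices. If $k = 2m$, then $k-1 = 2m-1$, so again one side has at least $\lceil (2m-1)/2 \rceil = m$ vertices. In both cases, take $m$ consecutive vertices of $P$ on that side starting next to $c$, together with $c$. This gives an $(m+1)$-path ending at $c$. As a subpath of the simple path $P$ it is simple, and it contains no arm-vertex of $\mcA$.

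Such a path could be appended as a new arm to $\mcA$, which contradicts the maximality of $\mcA$ in \Cref{def02}. Hence $C \setminus \{c\}$ is a $k$PVC, and the lemma follows.

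The only delicate step is the parity count guaranteeing a side of length at least $m$. The subtle point is that in the even case the shorter arm of $T_k$ has length $m-1$ while new arms must have length $m$, so I must check that $\lceil (k-1)/2 \rceil \ge m$ also holds for $k = 2m$; it does.
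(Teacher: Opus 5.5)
Your proof is correct and takes the same approach as the paper: the center $c$ is the removable vertex, because a $k$-path through $c$ that avoids every arm-vertex would yield a new arm and so contradict maximality. You also spell out the step the paper leaves implicit, namely that one side of $P$ after removing $c$ has at least $\lceil (k-1)/2\rceil \ge m$ vertices in both parity cases.
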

\begin{proof}
Suppose $P$ is a $k$-path in $G$ passing the center $c$.
We claim that $P$ contains at least an arm-vertex of $\mcA$, since otherwise from $P$ one can find a new arm for $\mcA$, contradicting the maximality.
In other words, $C \setminus \{c\}$ is still a $k$PVC.
This proves the lemma.
\end{proof}

Basically, using \Cref{lemma01}, we assign a weight of $1$ to every arm-vertex while a weight of $p-1$ to the center.
This way, the minimum cost of covering a $p$-arm spider is at least $p-1$.
Furthermore, given that all these weights are positive,
by \Cref{lemma02} our to-be computed solution will not contain all the vertices of the spider but at most $pm - 1 + (p-1) = p \cdot \frac {k+1}2 - 2$ when $k$ is odd
(or a similar value when $k$ is even).
It follows that the approximation ratio is no more than $\frac {3k-1}4$.
We will formally present the algorithm and prove its performance in the next two sections.

\Cref{lemma02} also inspires the definition of {\em minimal} $k$PVC:
A $k$PVC $C$ of $G$ is {\em minimal} if for every $v \in C$, $C \setminus \{v\}$ is no longer a $k$PVC.

\begin{lemma}
\label{lemma03}
For a $k$PVC $C$ of $G$, $C$ is minimal if and only if for every $v \in C$, there exists a $k$-path $P$ such that $V(P) \cap C = \{v\}$, which is called the {\em witness path} of $v$.
\end{lemma}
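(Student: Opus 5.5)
The plan is to prove the two directions separately, unfolding the definitions of $k$PVC and of minimality. The only fact used is that a vertex set $C'$ is a $k$PVC exactly when $V(P) \cap C' \neq \emptyset$ for every $k$-path $P$ in $G$.

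For the ``only if'' direction, assume $C$ is minimal and fix $v \in C$. By minimality, $C \setminus \{v\}$ is not a $k$PVC. Hence there is a $k$-path $P$ in $G$ with $V(P) \cap (C \setminus \{v\}) = \emptyset$. Since $C$ itself is a $k$PVC, $V(P) \cap C \neq \emptyset$. Combining the two facts forces $V(P) \cap C = \{v\}$, so $P$ is a witness path of $v$.

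For the ``if'' direction, assume every $v \in C$ has a witness path $P_v$ with $V(P_v) \cap C = \{v\}$. Then $P_v$ is a $k$-path that avoids $C \setminus \{v\}$ entirely. So $C \setminus \{v\}$ is not a $k$PVC for any $v \in C$, which is exactly the definition of minimality. The hypothesis that $C$ is a $k$PVC is used only in the first direction.

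There is no real obstacle. The one point to state explicitly is that minimality is defined by single-vertex removals, which matches the single-vertex witness condition directly, so no monotonicity argument over larger subsets is needed.
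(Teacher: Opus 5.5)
Your proposal is correct and follows the paper's proof essentially verbatim: minimality gives a $k$-path avoiding $C \setminus \{v\}$, feasibility of $C$ forces it to meet $C$ exactly in $v$, and conversely a witness path shows $C \setminus \{v\}$ is not a $k$PVC. Your remark that minimality is defined by single-vertex removals, so no monotonicity argument is needed, is accurate.
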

\begin{proof}
Consider a vertex $v \in C$.
Since $C \setminus \{v\}$ is not a $k$PVC, there exists a $k$-path $P$ that is not covered by $C \setminus \{v\}$, i.e., $V(P) \cap (C \setminus \{v\}) = \emptyset$.
On the other hand, since $C$ is a $k$PVC, $P$ must contain at least one vertex of $C$, and thus exactly the vertex $v$.

Conversely, removing $v$ from $C$ will leave $P$ uncovered, and thus $C$ is minimal.
\end{proof}

\section{A polynomial-time exact algorithm for spider-free graphs}\label{sec02}
In this section, we show that the $k$PVC problem on spider-free graphs is solvable in polynomial time.
A spider-free graph is one that does not contain any three-arm spider $T_k$ defined in Eq.~\eqref{eq01},
and thus it is the boundary case terminating the first layering phase of the algorithm.
For distinction, we fix a spider-free graph $H = (V, E)$ for discussion.

There are two steps in the exact algorithm.
First, we show that the spider-free graph $H$ has pathwidth bounded by a function depending only on $k$, and construct a path decomposition of bounded width for $H$.
We then give a dynamic program on this path decomposition to compute a minimum weight $k$PVC for $H$.

\subsection{Bounded pathwidth and a path decomposition}
Given a graph, an {\em edge contraction} merges the two connecting vertices into one and removes the loop just formed.
A graph $F$ is a {\em minor} of $H$ if $F$ can be obtained from $H$ by a sequence of edge contractions, edge and vertex deletions.
An {\em edge subdivision} adds new vertices to the edge whereby breaking the edge into two.
A graph $F$ is a {\em topological minor} of $H$ if a subdivision of $F$ is (isomorphic to) a subgraph of $H$.
One sees that a topological minor is a minor, but the converse is not necessarily true.
Nevertheless, when the minor is a subcubic graph, it is a topological minor too due to Diestel~\cite{Die17}.

\begin{proposition}{\rm\cite[Proposition~1.7.3]{Die17}}
\label{prop01}
If a subcubic graph $F$ is a minor of $H$, then $F$ is a topological minor of $H$ too.
\end{proposition}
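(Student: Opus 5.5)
The plan is the standard argument, following Diestel. Let $F$ be a subcubic graph that is a minor of $H$. I would use the branch-set characterization of minors: $F$ is a minor of $H$ if and only if there is a family of pairwise disjoint vertex sets $\{V_x \subseteq V(H) : x \in V(F)\}$ with the following two properties. First, each $H[V_x]$ is connected. Second, for every edge $xy \in E(F)$ there is an edge of $H$ with one end in $V_x$ and the other in $V_y$. This is routine to verify from the definition via contractions and deletions, by tracking which original vertices get merged into each vertex of $F$. I would also fix one such edge $e_{xy}$ for every edge $xy$ of $F$.

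Next, I would build a subdivision of $F$ inside each branch set. Fix $x \in V(F)$ and let $d = \deg_F(x) \le 3$. Let $U_x \subseteq V_x$ be the set of endpoints in $V_x$ of the chosen edges $e_{xy}$. Then $|U_x| \le d \le 3$; it may be smaller if several chosen edges share an endpoint. The goal is a vertex $b_x \in V_x$ together with paths inside $H[V_x]$ from $b_x$ to each point of $U_x$ that meet only at $b_x$. I would do this by cases on $|U_x|$. If $|U_x| \le 1$, take $b_x$ to be that vertex, or any vertex of $V_x$ when $U_x$ is empty. If $|U_x| = 2$, take a path in $H[V_x]$ between the two points and let $b_x$ be one of its ends. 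If $|U_x| = 3$, take a minimal subtree of a spanning tree of $H[V_x]$ that contains $U_x$. This subtree has at most three leaves, so it is either a path, in which case $b_x$ is the middle terminal, or a subdivided claw, in which case $b_x$ is its unique branching vertex. In both cases the tree splits at $b_x$ into internally disjoint paths to the terminals. When several chosen edges share an endpoint $u \in U_x$, the path to $u$ is shared, which is not allowed. To handle this, I would route each edge $e_{xy}$ separately, so that the branch toward $y$ is the path from $b_x$ to the endpoint of $e_{xy}$ followed by $e_{xy}$ itself. If two such routes coincide up to $u$, I would simply choose $b_x := u$. This works cleanly because the degree bound of $3$ allows only the configurations listed above.

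Finally, I would assemble the pieces. For each edge $xy \in E(F)$, concatenate the tree path from $b_x$ to the end of $e_{xy}$ in $V_x$, the edge $e_{xy}$, and the tree path in $V_y$ to $b_y$. The resulting paths are internally disjoint. Paths for different edges can meet only inside a single branch set, and there they share only $b_x$ by construction; distinct branch sets are disjoint. The union of these paths is therefore a subdivision of $F$ in $H$, with the $b_x$ as the branch vertices, so $F$ is a topological minor of $H$.

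The main obstacle is the within-branch-set step, that is, showing the minimal subtree yields a single branching vertex with internally disjoint rays even when terminals coincide or lie on each other's paths. This is exactly where the hypothesis $\deg \le 3$ is used: with four terminals the tree could have two branching vertices, and the argument fails.
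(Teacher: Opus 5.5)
The paper gives no proof of its own here; it only cites Diestel's Proposition~1.7.3, and your argument is the standard branch-set proof of that result and is correct. The one step to state more precisely is the choice of $b_x$ when $\deg_F(x)=3$ but $|U_x|=2$: there $b_x$ must be the endpoint shared by two of the chosen edges (your rule $b_x := u$ ensures this), not an arbitrary end of the connecting path.
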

Since every subdivision of the three-arm spider $T_k$ contains $T_k$ as a subgraph,
this proposition immediately implies that excluding $T_k$ as a subgraph in $H$ also excludes $T_k$ as a minor.
We summarize this conclusion in the next lemma.

\begin{lemma}
\label{lemma04}
If $H$ is a spider-free graph, then the three-arm spider $T_k$ is not a minor of $H$.
\end{lemma}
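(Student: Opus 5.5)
We argue by contraposition: assuming $T_k$ is a minor of $H$, we produce $T_k$ as a subgraph of $H$, contradicting that $H$ is spider-free. The proof has two steps.

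First we check the hypothesis of \Cref{prop01}. The graph $T_k$ is the tree $S(a,b,c)$ with $(a,b,c)=(m,m,m)$ or $(m-1,m,m)$ as in Eq.~\eqref{eq01}. Its center has degree $3$; every internal arm-vertex has degree $2$; every arm endpoint has degree $1$. So $T_k$ is subcubic. One degenerate case needs care: for $k$ even with $m=1$, the arm of length $m-1$ is empty. But then $k=2$, which is excluded since $k\ge 3$. Hence all three arms have positive length and $T_k$ genuinely has maximum degree $3$.

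Second, we apply \Cref{prop01}: since $T_k$ is a minor of $H$, some subdivision $F'$ of $T_k$ is isomorphic to a subgraph of $H$. The key observation is that subdividing edges of $S(a,b,c)$ only lengthens its arms. The center remains the unique degree-$3$ vertex, and each arm becomes a path from the center to a leaf. Thus $F' \cong S(a',b',c')$ with $a'\ge a$, $b'\ge b$, $c'\ge c$.

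Truncating each arm of $F'$ to its first $a$, $b$, $c$ edges from the center yields a subgraph isomorphic to $S(a,b,c)=T_k$. Therefore $T_k$ is a subgraph of $H$, the desired contradiction.

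No step is a real obstacle. The only points requiring care are confirming that $T_k$ is subcubic (the arm-length check above) and justifying that a subdivision of a spider is a spider with arms at least as long.
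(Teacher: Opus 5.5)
Your proof is correct and takes the same route as the paper: apply \Cref{prop01} to the subcubic tree $T_k$, then observe that any subdivision of $T_k$ contains $T_k$ as a subgraph. Your explicit checks that $T_k$ is subcubic for $k \ge 3$, and that subdividing a spider only lengthens its arms, supply details the paper leaves implicit.
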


Given a graph $H$, a {\em tree decomposition} represents the vertices of $H$ as subtrees of a tree $T$,
in such a way that vertices in $H$ are adjacent only when the corresponding subtrees of $T$ overlap.
In detail, a tree decomposition~\cite[Section~12.3]{Die17} is a pair $(\mcB, T)$, where $\mcB = \{B_1, B_2, \dots, B_q\}$ is a family of subsets (called {\em bags}) of vertices of $H$,
and $T$ is a tree whose nodes are these bags, satisfying the following properties:
1) every vertex of $H$ appears in (called {\em associated with}) some bags that are connected in $T$, and
2) for every edge $(u, v)$ in $H$, both $u$ and $v$ appear in some bags.
The {\em width} of the tree decomposition is the size of its largest bag minus one, i.e., $\max_{1 \le i \le q} |B_i| - 1$,
and the {\em treewidth} $\tw(H)$ of the graph $H$ is the minimum width among all possible tree decompositions of $H$. 

A tree decomposition in which the underlying tree $T$ is a path graph is called a {\em path decomposition} of $H$,
and the width parameter derived from this path decomposition is called {\em pathwidth}.
Analogously, the {\em pathwidth} $\pw(H)$ of the graph $H$ is the minimum width among all possible path decompositions of $H$~\cite{Die17}.
We remark that for notational convenience, for a path decomposition $(\mcB, P)$, where $\mcB = \{B_1, B_2, \dots, B_q\}$ and $P$ is the underlying path,
we assume these bags are indexed to match their node order on the path $P$.
One sees that every graph has path and tree decompositions, for example a single bag of all the vertices of the graph.
Efforts are put in finding an {\em optimal} path or tree decomposition that achieves the minimum pathwidth or treewidth, correspondingly, followed by their applications.

We next use the following result on the pathwidth of a graph due to Bienstock et al.~\cite{BRS91} (where a forest is a graph consisting of one or multiple trees).

\begin{lemma}{\rm\cite{BRS91}}
\label{lemma05}
If a graph $H$ contains no forest $F$ as a minor, then its pathwidth $\pw(H) \le |V(F)| - 2$.
\end{lemma}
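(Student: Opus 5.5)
This lemma is quoted from Bienstock, Robertson, Seymour and Thomas~\cite{BRS91}, and the paper will apply it as a black box, with $F = T_k$ via \Cref{lemma04}. A self-contained argument would follow their excluded-forest theorem. The approach is to prove the contrapositive constructively: if $\pw(H) \ge |V(F)| - 1$, then every forest on $|V(F)|$ vertices, in particular $F$, is a minor of $H$.

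First I will reduce to trees. Adding edges between components of $F$ produces a tree $F'$ on the same vertex set, and any minor model of $F'$ contains a model of $F$ once the extra edges are deleted. So it suffices to prove the claim for trees on $n = |V(F)|$ vertices.

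Next I will replace pathwidth by a dual obstruction. Large pathwidth is equivalent to the absence of a small-width ``linear'' separation strategy: node search number, or vertex separation number, or a blockage/bramble-type certificate in the style of Bienstock et al. So $\pw(H) \ge n-1$ yields a combinatorial certificate that every linear ordering of $V(H)$ has some cut with at least $n-1$ vertices that have neighbours to the right of that cut.

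Using this certificate, I will build a minor model of the tree greedily. Fix an ordering $v_1, \dots, v_n$ of the tree in which each $v_i$ with $i \ge 2$ has exactly one neighbour among the earlier vertices. I then grow disjoint connected branch sets $X_1, \dots, X_i$ in $H$. Each new $X_{i+1}$ must attach to the branch set of its parent, and the branch sets must stay ``attachable'', meaning each still meets the large side of the obstruction through disjoint paths.

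The main obstacle is the induction invariant. One needs a statement such as ``if $H$ has no $(F,X)$-model with prescribed roots, then $H$ has a path decomposition of width at most $n-2$ whose final bag contains $X$'', and this must survive the step that adds a leaf. I would follow Diestel's short proof of the Bienstock--Robertson--Seymour--Thomas theorem. It inducts on $n$ using a separation $(A,B)$ of order at most $n-1$ chosen to be minimal. One then either extends the partial model through $n-1$ disjoint $A$--$B$ paths given by Menger's theorem, or glues two path decompositions of width at most $n-2$ along the separator. The delicate part is checking that the gluing keeps the width bound exactly $n-2$, which is the sharp constant. In practice, since the paper only needs some bound depending on $k$, citing~\cite{BRS91} suffices.
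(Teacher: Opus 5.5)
Your proposal matches the paper, which gives no proof of this lemma and imports it from Bienstock, Robertson, Seymour and Thomas as a black box, exactly as you conclude; your reduction from forests to trees is also correct. Your extra self-contained sketch is not a proof: the rooted invariant is never stated or verified, and gluing two path decompositions along a separation is routine (the width is just the maximum), so the real work is obtaining decompositions with the prescribed end bags, and the sketch is best read as commentary on the cited result.
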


Since $T_k$ is a tree, Lemmas~\ref{lemma04} and~\ref{lemma05} give an explicit constant upper bound on the pathwidth of the spider-free graph $H$.

\begin{corollary}
\label{coro01}
For a spider-free graph $H$, its pathwidth
$\pw(H) \le p_k =
  \begin{cases}
    \dfrac {3k - 5}2, & k \mbox{ odd;}\\
    \dfrac {3k - 4}2, & k \mbox{ even.}
  \end{cases}$
\end{corollary}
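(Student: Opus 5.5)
The corollary follows by combining \Cref{lemma04} with \Cref{lemma05} and then counting the vertices of $T_k$ via Eq.~\eqref{eq02}. The plan has three steps.

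First, I will check that $T_k$ is a forest. By \Cref{def01}, $T_k$ is obtained by identifying one endpoint of each of three paths, so it is a tree (a subdivided claw). In particular it is a forest, and \Cref{lemma05} applies with $F = T_k$.

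Second, since $H$ is spider-free, \Cref{lemma04} says $T_k$ is not a minor of $H$. \Cref{lemma05} then gives $\pw(H) \le |V(T_k)| - 2$.

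Third, I substitute Eq.~\eqref{eq02}.
\begin{itemize}
\item For odd $k = 2m+1$: $|V(T_k)| = \frac{3k-1}{2}$, so $\pw(H) \le \frac{3k-1}{2} - 2 = \frac{3k-5}{2}$.
\item For even $k = 2m$: $|V(T_k)| = \frac{3k}{2}$, so $\pw(H) \le \frac{3k}{2} - 2 = \frac{3k-4}{2}$.
\end{itemize}
These are exactly the two cases of $p_k$.

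The only step that needs care is confirming that \Cref{lemma04} really holds, i.e., that excluding $T_k$ as a subgraph excludes it as a minor. This rests on \Cref{prop01}, because $T_k$ has maximum degree $3$ at its center. One must also observe that any subdivision of $T_k$ has all three arms at least as long as those of $T_k$, and so contains $T_k$ as a subgraph. Since the paper has already stated \Cref{lemma04}, no further work is required, and the corollary is a direct arithmetic consequence.
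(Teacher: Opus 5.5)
Your proposal is correct and follows the paper's proof: since $T_k$ is a tree, \Cref{lemma04} and \Cref{lemma05} give $\pw(H) \le |V(T_k)| - 2$, and substituting Eq.~\eqref{eq02} yields both cases of $p_k$. Your justification of \Cref{lemma04} (that $T_k$ is subcubic so \Cref{prop01} applies, and that any subdivision of $T_k$ only lengthens its arms and so contains $T_k$) is the same reasoning the paper gives just before stating that lemma.
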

\begin{proof}
One sees that when $k = 2m + 1$, $|V(T_k)| = 1 + 3m = \frac {3k - 1}2$, and thus $\pw(H) \le |V(T_k)| - 2 = \frac {3k - 5}2$.

When $k = 2m$, $|V(T_k)| = 1 + (m - 1) + m + m = 3m = \frac {3k}2$, and thus $\pw(H) \le |V(T_k)| - 2 = \frac {3k - 4}2$.
\end{proof}

For a fixed $k$, a bounded-width path decomposition of a spider-free graph $H$ can be constructed in linear time.
For example, one can first call Bodlaender's algorithm~\cite{Bod96} to construct a tree decomposition of width at most $p_k$ for $H$ in linear time,
and then call Bodlaender and Kloks's algorithm~\cite{BK96} to construct a path decomposition of width at most $p_k$ for $H$ in linear time.
In the sequel, we assume a path decomposition of width at most $p_k$ for $H$ is readily available.

\begin{definition}
\label{def03}
A {\em nice} path decomposition $(\mcB, P)$ for a graph $H$, where $\mcB = \{B_0, B_1, \dots, B_q \}$, is a path decomposition satisfying
1) the two end-bags are empty, i.e., $B_0 = B_q = \emptyset$, and
2) for each $i = 1, 2, \dots, q$,
either $B_i = B_{i-1} \cup \{v\}$ for some vertex $v$ of $H$ ({\em introduce} $v$) or $B_i = B_{i-1} \setminus \{v\}$ for some vertex $v$ of $H$({\em forget} $v$).
\end{definition}

The following standard transformation shows that imposing this nice form does not increase the width.

\begin{lemma}
\label{lemma06}
A path decomposition $(\mcB = \{B_1, B_2, \dots, B_q\}, P)$ of a graph $H$ can be transformed in linear time into a nice path decomposition
$(\mcB' = \{B'_0, B'_1, \dots, B'_{2n} \}, P')$ of the same width, where $n = |V(H)|$.
\end{lemma}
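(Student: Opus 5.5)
We first normalize the given decomposition and then refine it, one vertex change at a time, into the nice form of \Cref{def03}. We may assume every vertex of $H$ lies in some bag; isolated vertices not covered can be put into new singleton bags appended at the end, which does not increase the width unless $H$ has no bags at all. For every vertex $v$, let $\ell(v)$ and $r(v)$ be the first and last indices $i$ with $v \in B_i$. By property 1) of a path decomposition, $v \in B_i$ exactly for $\ell(v) \le i \le r(v)$. These endpoints are computed in one scan over the bags, in time linear in the total bag size. That total is $O(n)$ for fixed width once redundant consecutive equal bags are merged and $q$ is reduced to $O(n)$.

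\textbf{Construction.} Write the sequence of events, indexed by $i = 1, \dots, q$, as follows.
\begin{itemize}
\item For each bag position $i$, first forget every $v$ with $r(v) = i-1$, in arbitrary order.
\item Then introduce every $v$ with $\ell(v) = i$, in arbitrary order.
\item After position $q$, forget all $v$ with $r(v) = q$.
\end{itemize}
Start from $B'_0 = \emptyset$ and apply the events in this order. Each event adds or removes a single vertex, so condition 2) of \Cref{def03} holds. Every vertex is introduced exactly once and forgotten exactly once, so there are exactly $2n$ events. The final bag is therefore $B'_{2n} = \emptyset$, giving condition 1).

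\textbf{It is a path decomposition.} For property 1), each vertex $v$ lies precisely in the consecutive bags between its introduce event and its forget event. For property 2), take an edge $(u,v)$. Some original bag $B_i$ contains both endpoints, so $\ell(u), \ell(v) \le i \le r(u), r(v)$. Consider the new bag reached right after all introductions at position $i$. It contains every vertex $x$ with $\ell(x) \le i \le r(x)$, which is exactly $B_i$, so it contains both $u$ and $v$.

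\textbf{Width.} This is the only step needing care. Intermediate bags at stage $i$ are, during the forget phase, subsets of $B_{i-1}$. During the introduce phase they are subsets of $B_i$, because the vertices present are those with $\ell \le i$ and $r \ge i$. Hence every new bag is contained in some original bag, and the width does not increase. Since each new bag is one of the original bags at the corresponding moment, the width is in fact equal.

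\textbf{Linear time.} Bucketing vertices by $\ell(\cdot)$ and $r(\cdot)$ lets us emit the events in $O(n + q)$ time. Outputting the bags explicitly costs $O(n \cdot p_k)$, which is linear for fixed $k$.
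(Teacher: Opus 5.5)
Your proof is correct and follows the paper's construction: forget $B_{i-1}\setminus B_i$ one vertex at a time, then introduce $B_i\setminus B_{i-1}$, and finish by forgetting $B_q$; you phrase it through the interval endpoints $\ell(v), r(v)$, and the counting ($2n$ events), coverage (every original bag reappears) and width arguments are the same as the paper's. Your extra remarks, namely the $O(n\cdot p_k)$ cost of outputting bags explicitly and the uncovered-vertex preprocessing (which is unnecessary, since property 1) already puts every vertex in some bag), are harmless; one wording fix is that the width equality holds because every original bag reappears among the new bags, not because every new bag is an original one.
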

\begin{proof}
We construct the bags in the new path decomposition sequentially and explicitly, starting with the empty bag $B'_0 = \emptyset$.
Assuming without loss of generality that $B'_i$ has just been constructed using a sub-collection of bags $\{B_1, B_2, \ldots, B_{j-1}\}$ of $\mcB$,
and that $B'_i = B_{j-1}$, where $B_0 = \emptyset$ for notational convenience.
Forgetting one vertex of $B'_i \setminus B_j = B_{j-1} \setminus B_j$ at a time to produce $s = |B_{j-1} \setminus B_j|$ new bags sequentially,
namely, $B'_{i+1}, \dots, B'_{i+s}$.
One thus sees that $B'_{i+s} = B_{j-1} \cap B_j$.
Next, introducing one vertex of $B_j \setminus B_{j-1}$ at a time to produce $t = |B_j \setminus B_{j-1}|$ new bags sequentially,
namely, $B'_{i+s+1}, \dots, B'_{i+s+t}$.
One thus sees that $B'_{i+s+t} = B_j$.
This way, using the bag $B_j$, we have constructed $s + t = |B_{j-1} \setminus B_j| + |B_j \setminus B_{j-1}|$ sequential bags such that
these $i + s + t$ bags satisfy the second property in \Cref{def03}.

At the end of using the last bag $B_q$, we achieve a sequence of bags $B'_0, B'_1, \dots, B'_\ell$ such that $B'_\ell = B_q$.
Lastly, forget the vertices of $B_r$ one at a time from $B'_\ell$ to produce $|B_q|$ bags with the last one $B'_{q'} = \emptyset$, where $q' = \ell + |B_q|$.
Therefore, both properties in \Cref{def03} are satisfied.

For notational convenience, we denote $B_{q+1} = \emptyset$.
We have $q' = \sum_{j=1}^q |B_j \setminus B_{j-1}| + \sum_{j=2}^{q+1} |B_{j-1} \setminus B_j| = |V(H)| + |V(H)| = 2 |V(H)|$. i.e.,
every vertex of $H$ is introduced exactly once and forgotten exactly once.
This proves the number of constructed bags in $\mcB'$. 

It is seen that all the old bags of $\mcB$ occur in $\mcB'$, and thus every edge of $H$ continues to exist.
Also, by the ways of forgetting and introducing vertices from $B_{j-1}$ to achieve $B_{j-1} \cap B_j$ and then to achieve $B_j$,
every vertex of $H$ continues to associate with an {\em interval} of bags in $\mcB'$, and thus a subpath of the path $P'$.
That is, $\mcB'$ is a nice path decomposition for $H$.
Since we do not produce any bag of size larger than $\max_{1 \le j \le q} |B_j|$, the pathwidth of $\mcB'$ is the same as that of $\mcB$.
One clearly sees that the construction takes $O(n + q)$ time, which is linear.
This finishes the proof of the lemma.
\end{proof}

By \Cref{lemma06}, in the sequel, we assume a nice path decomposition $\mcB = \{B_0, B_1, \dots, B_{2n}\}$ of width at most $p_k$ for a spider-free graph $H$ is readily available,
where $n = |V(H)|$ and each vertex of $H$ is introduced first and then forgotten from a later bag.
We use $\mcB$ to solve the weighted $k$PVC problem on $H$ exactly.

\subsection{A dynamic programming exact algorithm}
Consider a spider-free graph $H$ and a nice path decomposition $\mcB = \{B_0, B_1, \dots, B_{2n}\}$ of width at most $p_k$ for $H$ stated right in the above.
Our algorithm examines these bags one by one sequentially, and distinguishes on how the current bag $B_i$ is obtained from $B_{i-1}$, by introducing or forgetting a vertex $v$.

Denote $U_i = \cup_{j=1}^i B_j$, the set of vertices introduced up to $B_i$, and the induced subgraph $H_i = H[U_i]$.
Importantly, since when a vertex is forgotten all its neighbors must have all been introduced,
there is no edge between the set $U_i \setminus B_i$ of vertices (which have been forgotten) and the set $V(H) \setminus U_{i-1}$ of vertices (which are to be introduced).
This leads to our idea to maintain the collection of all {\em partial} $k$-paths in the current graph $H[U_i \setminus C]$, where $C \subseteq U_i$ is the set of deleted vertices,
and update them when the vertex $v$ is introduced, deleted, or forgotten.
We point out that the existence of a {\em complete} $k$-path indicates the {\em infeasibility} of the deleted vertex set $C$.

Furthermore, we ignore the detailed vertices of $C \cap (U_i \setminus B_i)$, but only record their total weight.
On the other hand, we do keep the detailed vertices of $D = C \cap B_i$.
With respect to $D$, a partial $k$-path is denoted as a vector
\[
\sigma = (\sigma_1, \dots, \sigma_k) \in ((B_i \setminus D) \cup \{0, \star\})^k,
\]
where a vertex $v \in B_i \setminus D$ denotes a {\em retained} vertex, $0$ denotes an unused position,
and each occurrence of $\star$ denotes a distinct retained forgotten vertex of $(U_i \setminus B_i) \setminus C$.
The consecutive occupied positions correspond to adjacent distinct vertices.
Let $\Phi$ denote the collection of all such partial $k$-paths that are valid in the current residual graph;
define a dynamic programming (DP) state as a pair $(D, \Phi)$,
and ${\rm DP}_i[D, \Phi]$ stores the minimum weight of a deletion set in $U_i$ producing this state.
That is, if several deletion sets produce the same state, then only the minimum-weight one is kept for backtracking purpose.

At the initial empty bag $B_0 = \emptyset$, we initialize $\Phi = \{(0, \ldots, 0)\}$ and set ${\rm DP}_0[\emptyset, \Phi] = 0$.
Below we assume all possible states $(D, \Phi)$ at the bag $B_{i-1}$ have been decided, and their associated minimum weight ${\rm DP}_{i-1}[D, \Phi]$ computed.
We distinguish the two introduce and forget transitions.

\paragraph{Case 1: $B_i = B_{i-1} \cup \{v\}$.}
For each state $(D, \Phi)$ at $B_{i-1}$, the DP algorithm considers whether $v$ is deleted or retained.

If $v$ is deleted, then $v$ is added to $D$, its weight $w(v)$ is added to the objective value, and all the partial $k$-paths remain unchanged.
That is, a valid state $(D \cup \{v\}, \Phi)$ is reached and ${\rm DP}_i[D \cup \{v\}, \Phi] = {\rm DP}_{i-1}[D, \Phi] + w(v)$.

Suppose that $v$ is retained.
Firstly, each partial $k$-path $\sigma = (\sigma_1, \ldots, \sigma_k) \in \Phi$ remains valid, since $v$ does not have to be used in the path.
In addition, $v$ can be added the path $\sigma$ to produce new representations, by replacing each position $x$ with $\sigma_x = 0$ by $v$.
To be valid, every occupied position adjacent to $x$, i.e., $\sigma_{x-1}$ and $\sigma_{x+1}$ if they are within range,
has to have a vertex of $B_{i-1} \subset B_i$ adjacent to $v$ in the underlying graph $H$, or otherwise the new representation is invalid and deleted.
We note that an occupied neighboring position cannot be $\star$, which represents a forgotten vertex not adjacent to $v$ in the graph $H$.
The final collection, denoted as $\Phi'$, is the union of $\Phi$ and these new valid partial $k$-paths containing $v$.
We have ${\rm DP}_i[D, \Phi'] = {\rm DP}_{i-1}[D, \Phi]$.

\paragraph{Case 2: $B_i = B_{i-1} \setminus \{v\}$.}
For each state $(D, \Phi)$ at $B_{i-1}$, the DP algorithm considers whether $v$ has been deleted or is retained.

If $v \in D$, i.e., $v$ has been deleted, then it is removed from $D$ to update $D$ to be a subset of $B_i$,
and all the partial $k$-paths remain unchanged.
That is, a valid state $(D \setminus \{v\}, \Phi)$ is reached and ${\rm DP}_i[D \setminus \{v\}, \Phi] = {\rm DP}_{i-1}[D, \Phi]$.

Suppose that $v \notin D$, i.e., $v$ is retained.
For each partial $k$-path $\sigma = (\sigma_1, \ldots, \sigma_k) \in \Phi$, if it does not contain $v$, then it remains valid;
otherwise, the occurrence of $v$ is replaced by $\star$ while all other symbols remain unchanged.
We note that such a symbol-replacement process might lead to identical partial $k$-paths, and if so then only one is kept.
For the final collection, denoted as $\Phi'$, we have ${\rm DP}_i[D, \Phi'] = {\rm DP}_{i-1}[D, \Phi]$.

We also note that in the above two transitions, multiple states at $B_{i-1}$ can lead to the same state at $B_i$,
and if so then the algorithm keeps the minimum weight and the associated deletion set for backtracking purpose.
At the final empty bag $B_{2n}$, every remaining state corresponds exactly to a $k$PVC of the underlying graph $H$,
and the one achieving the minimum weight can be backtracked to produce the deletion set returned as the solution, denoted as ${\rm ALG}(H, w)$.

\begin{theorem}
\label{thm02}
For every fixed $k$, the dynamic programming algorithm is an $(n \cdot f(k))$-time exact algorithm for the weighted $k$PVC problem on spider-free graphs,
where $n$ is the order of the graph and $f(k)$ is a function depending only on $k$.
\end{theorem}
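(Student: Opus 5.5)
The proof has two parts. Correctness comes from an invariant maintained along the nice path decomposition of \Cref{lemma06}. The running time comes from counting the DP states at each bag, which is where the width bound $p_k$ of \Cref{coro01} enters.

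\textbf{Correctness.} I would prove by induction on $i$ the following invariant. For every deletion set $C \subseteq U_i$, the state obtained from $C$ is $(D,\Phi)$ with $D = C \cap B_i$, and $\Phi$ is exactly the set of traces of simple paths on at most $k$ vertices in $H_i - C$. Here a trace records, position by position, each vertex of $B_i \setminus D$, a $\star$ for each forgotten retained vertex, and $0$ for unused positions. The traces should be normalised so that a path can still be extended at its ends, i.e. padding with $0$s is allowed at both ends.

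The key structural fact is the one already noted in the paper. No vertex of $U_i \setminus B_i$ is adjacent to any vertex of $V(H)\setminus U_{i}$. Hence any extension of a partial path by future vertices must attach at bag vertices or at $0$ positions, and the $\star$ entries only matter through their count and their positions.

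For the introduce case, I would check that every path of $H_i - C$ through the new vertex $v$ arises by placing $v$ at a $0$ position of a trace of a path of $H_{i-1}-C$. This holds because all neighbours of $v$ in $U_i$ lie in $B_i$; this is the step that needs care. A path through $v$ may be split by $v$ into two pieces, so $v$ need not sit at an end. To handle this, I would interpret $\Phi$ as recording families of disjoint subpaths laid out along $k$ positions, separated by $0$s. Then inserting $v$ at a $0$ with occupied neighbours joins two pieces, and the described transition rule handles this exactly when the neighbours are adjacent to $v$. I expect this interpretation of $\Phi$, and verifying the transitions against it, to be the main obstacle.

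The forget case is immediate: the vertex becomes $\star$. Feasibility means that no trace with all $k$ positions occupied arises; such states are discarded. At $B_{2n}$ every trace consists only of $\star$ and $0$, so the surviving states are exactly the $k$PVCs. Moreover, keeping the minimum weight per state is safe, because future transitions depend only on the state and not on the rest of $C$. This exchange argument gives optimality.

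\textbf{Running time.} By \Cref{coro01}, $|B_i|\le p_k+1$. So there are at most $2^{p_k+1}$ choices of $D$. Each trace lies in $(B_i\cup\{0,\star\})^k$, so there are at most $N=(p_k+3)^k$ possible traces, and hence at most $2^N$ sets $\Phi$. The number of states per bag is therefore at most $g(k)=2^{p_k+1}\cdot 2^{(p_k+3)^k}$. Each transition processes one state in time polynomial in $N$, using adjacency tests inside the bag. There are $2n$ bags, so the total time is $n\cdot f(k)$. The preprocessing in \Cref{lemma06} and the construction of the path decomposition (Bodlaender's algorithms) run in linear time for fixed $k$, and backtracking adds $O(n)$.
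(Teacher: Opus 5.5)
Your proposal is correct and follows essentially the paper's proof. Both use a DP over the nice path decomposition with states $(D,\Phi)$, argue correctness from the fact that an optimal $k$PVC walks through to a valid final state (which you make rigorous with an explicit invariant, the discarding of complete vectors, and an exchange argument), and bound the states per bag by $2^{b}\cdot 2^{(b+2)^k}$ with $b\le p_k+1$. The linear-forest reading of $\Phi$ you settle on, rather than your first single-path invariant, is the right one and matches the paper's rule of inserting $v$ at any $0$ position; the adjacency check always succeeds because every neighbour of $v$ in $U_{i-1}$ already lies in $B_{i-1}$.
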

\begin{proof}
For a spider-free graph $H$, by Corollary~\ref{coro01}, $H$ has pathwidth at most $p_k < \frac 32 k$.
A nice path decomposition $\mcB$ of $H$ can be constructed in linear time.
Using $\mcB$, a minimum-weight $k$PVC $C^*$ of $H$ can walk through to delete or retain vertices correspondingly, reaching a valid state at the last bag $B_{2n}$ where $n = |V(H)|$.
It follows that ${\rm ALG}(H, w)$ has the same weight as $C^*$, and thus is an optimal solution.

Recall that in the DP algorithm we have states in the form $(D, \Phi)$,
where $D \subseteq B_i$ is the deletion set within the current bag $B_i$ and $\Phi$ is the collection of valid partial $k$-paths in the current residual graph.
Let
\[
b = \max_i |B_i| \le p_k + 1 < \frac 32 k.
\]
There are at most $(b + 2)^k$ possible partial $k$-paths.
Hence the number of possible collections $\Phi$ is at most $2^{(b + 2)^k}$, and there are at most $2^b$ choices for $D$.
Thus the number of states at each bag is no more than
\[
2^b \cdot 2^{(b + 2)^k} \le 2^{\frac 32 k + (\frac 32 k + 1)^k},
\]
which is a constant depending only on $k$.

By the two-case distinction in the DP algorithm, each transition examines a state at the preceding bag at most $k + 1$ times to produce all possible states at the current bag,
followed by duplicate removing.
Therefore each transition takes a time depending only on $k$.
We thus conclude that the overall running time for the DP algorithm, and consequently for the entire process leading to ${\rm ALG}(H, w)$,
is $n \cdot f(k)$, where $n = |V(H)|$ and $f(k)$ is a function depending only on $k$.
\end{proof}

\section{Algorithmic framework of layering and reverse extension}\label{sec03}
Given a subgraph $H$ of the current vertex positively weighted graph $G = (V, E)$,
a non-zero nonnegative function $c_H: V(H) \to \Qge$ that defines the vertex weights in $H$ is called a {\em local cost function} for $H$.
We use $(H, c_H)$ to denote the vertex non-negatively weighted graph $H$ in which $c_H$ is the vertex weight function.

\begin{definition}[Good local cost]
\label{def04}
Given a vertex subset $S \subseteq V$, a local cost function $c_H$ for $H = G[S]$, and a constant $\rho > 0$,
if $c_H(C \cap S) \le \rho \cdot \OPT(H, c_H)$ holds for every minimal $k$PVC $C$ of $G$, then we call the weighted graph $(H, c_H)$ {\em $\rho$-good} in $G$.
\end{definition}

Layering and reverse extension is an algorithmic framework that, given a weighted graph $G = (V, E)$,
cuts off a layer of vertex weights resulting in a {\em smaller} graph $G'$,
recursively computes a solution for $G'$,
then greedily extends the solution using some vertices in the layer to become a solution for the original graph $G$.
For an input graph $G = (V, E)$ in which the vertex weight function is $w(\cdot)$, we denote the procedure as {\sc LRE$(G, w)$},
in which the detailed steps of computation are depicted in \Cref{alg01}.

\begin{algorithm}[htpt]
\caption{{\sc LRE$(G, w)$}.\label{alg01}}
\begin{algorithmic}[1]
\Require
	$G = (V, E)$ with vertex weight function $w: V \rightarrow \Qge$;
\Ensure
	a minimal $k$PVC $C$.
\State $Z \leftarrow \{v \in V \mid w(v) = 0\}$;
\If {($Z \ne \emptyset$)}
	\State $V' \leftarrow V \setminus Z$;
	\State $G' \leftarrow G[V']$;
	\State $C \leftarrow$ {\sc LRE$(G', w|_{V \setminus Z})$};\Comment{recursive call;}
	\While {($Z \ne \emptyset$)}
		\State take a vertex $z$ of $Z$;\Comment{reverse extension starts;}
		\State $V' \leftarrow V' \cup \{z\}$;
		\State $G' \leftarrow G[V']$;
		\If {($C$ is not a $k$PVC of $G'$)}
			$C \leftarrow C \cup \{z\}$;
		\EndIf
		\State $Z \leftarrow Z \setminus \{z\}$;
	\EndWhile
	\State \textbf{return} \ $C$;
\EndIf
\If {($G$ is spider-free)}\Comment{boundary case;}
	\State \textbf{return} \ ${\rm ALG}(G, w)$;\Comment{a minimum-weight $k$PVC of $G$;}
\Else
	\State find a maximal-arm spider $H = \{c, Q_1, Q_2, \dots, Q_p\}$ in $G$;\Comment{layering starts;}
	\State define $c_H(v) = \begin{cases}
			p-1,	&v = c;\\
			1, 	&v \in V(H) \setminus \{c\};
			\end{cases}$
	\State $\lambda \leftarrow \max\left\{ \mu \ge 0 \mid w(v) - \mu \cdot c_H(v) \ge 0, \ \forall v \in V(H) \right\}$;
	\State define $\bar w(v) = \begin{cases}
			w(v) - \lambda \cdot c_H(v), 	&v \in V(H);\\
			w(v), 									&v \notin V(H);
			\end{cases}$
	\State \textbf{return} \ {\sc LRE}$(G, \bar w)$;\Comment{recursive call;}
\EndIf
\end{algorithmic}
\end{algorithm}

In Lines 13--20 of the layering phase, the conditional statement distinguishes whether a layering is necessary,
namely, if $G$ is spider-free, i.e., {\em boundary case},
then (Line 14) the dynamic programming algorithm stated in \Cref{thm02} is called to solve the instance exactly.
Otherwise, (Lines 16-20) a three-arm spider $T_k$ can be located and
then greedily a maximal-arm spider $H = \{c, Q_1, Q_2, \dots, Q_p\}$ containing $T_k$ is formed, where $p \ge 3$.
The vertices of this spider $H$ form a layer, for which a local cost function $c_H(\cdot)$ is designed (Line 17),
and subsequently the coefficient $\lambda$ of the layer is determined so that taking off the layer from the graph $G$ leaves at least one zero-weight vertex.
Then the algorithm recursively runs on the graph $G$ with the remaining vertex weights.
During the recursive call, in Lines 1--12, the algorithm first removes zero-weighted vertices,
then executes Lines 13--20 to find a solution $C$.
Upon $C$ is returned, the reverse extension phase starts, in which zero-weighted vertices are scanned in an arbitrary order to restore the graph,
so that if $C$ does not remain as a $k$PVC of the current graph, then the scanned vertex is added to $C$.
This ensures the final $C$ is a minimal $k$PVC for the original graph $G$.

\begin{lemma}[Reverse extension preserves minimality]
\label{lemma07}
Let $V' \subset V$ and $C$ be a minimal $k$PVC of the graph $G[V']$, and let $z \in V \setminus V'$ be a restored vertex.
If $C$ is a $k$PVC of $G[V' \cup \{z\}]$, then keep $C$; otherwise, replace it by $C^+ = C \cup \{z\}$.
In either case, the resulting set is a minimal $k$PVC of $G[V' \cup \{z\}]$.
\end{lemma}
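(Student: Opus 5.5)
I will verify the two cases separately, using the witness-path characterization of minimality in \Cref{lemma03} throughout. Write $G^+ = G[V' \cup \{z\}]$. The key observation is that $G[V']$ is an induced subgraph of $G^+$. Hence every $k$-path of $G[V']$ is also a $k$-path of $G^+$. Conversely, a $k$-path of $G^+$ either avoids $z$, in which case it lies in $G[V']$, or passes through $z$.

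\emph{Case 1: $C$ is a $k$PVC of $G^+$.} Feasibility holds by assumption. For minimality, take any $v \in C$. Since $C$ is minimal in $G[V']$, \Cref{lemma03} gives a witness path $P$ in $G[V']$ with $V(P) \cap C = \{v\}$. This $P$ is also a $k$-path of $G^+$, so it remains a witness for $v$ there. By \Cref{lemma03} again, $C$ is a minimal $k$PVC of $G^+$.

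\emph{Case 2: $C$ is not a $k$PVC of $G^+$.} First, feasibility of $C^+$. Every $k$-path of $G^+$ avoiding $z$ lies in $G[V']$ and is covered by $C$. Every $k$-path through $z$ is covered by $z \in C^+$. So $C^+$ is a $k$PVC.

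For minimality of $C^+$, we need witness paths in $G^+$ for each vertex of $C^+$.
\begin{itemize}
\item For $z$: since $C$ fails to be a $k$PVC of $G^+$, some $k$-path $P_z$ of $G^+$ has $V(P_z) \cap C = \emptyset$. This path must use $z$, because all $k$-paths of $G[V']$ are covered by $C$. Hence $V(P_z) \cap C^+ = \{z\}$, and $P_z$ is a witness for $z$.
\item For $v \in C$: the witness path $P$ of $v$ in $G[V']$ avoids $z$, since $z \notin V'$. Therefore $V(P) \cap C^+ = V(P) \cap C = \{v\}$, and $P$ is a witness for $v$ in $G^+$.
\end{itemize}
\Cref{lemma03} then yields that $C^+$ is minimal.

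There is no real obstacle here. The only point requiring care is that witness paths of $G[V']$ remain $k$-paths of $G^+$ and avoid $z$, which holds because $G[V']$ is an induced subgraph of $G^+$ not containing $z$. As a side remark, applying the lemma inductively along the loop in \Cref{alg01} shows that the returned $C$ is a minimal $k$PVC. The base case is the set returned by ${\rm ALG}$, which can be assumed minimal after greedily discarding redundant vertices.
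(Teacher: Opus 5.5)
Your proof is correct and is essentially the paper's argument, recast through the witness-path characterization of \Cref{lemma03} instead of a direct contradiction with the definition of minimality; in both, the key fact is that every $k$-path of $G[V']$ is a $k$-path of $G[V' \cup \{z\}]$ that avoids $z$. Your explicit witness $P_z$ also handles the case $x = z$, which the paper's proof passes over when it writes ``$x \in C$'' (that case is excluded precisely because $C$ is not a $k$PVC of $G[V' \cup \{z\}]$).
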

\begin{proof}
The minimality holds trivially if $C$ is a $k$PVC of $G[V' \cup \{z\}]$.

Next suppose $C$ is not a $k$PVC of $G[V' \cup \{z\}]$.
One first sees that $C^+ = C \cup \{z\}$ is a $k$PVC of $G[V' \cup \{z\}]$.
If $C^+ = C \cup \{z\}$ is not minimal, then there exists an $x \in C$ such that $C \cup \{z\} \setminus \{x\}$ is a $k$PVC of $G[V' \cup \{z\}]$.
Since $z$ is not on any $k$-path of $G[V']$, $C \setminus \{x\}$ is a $k$PVC of $G[V']$, contradicting the minimality of $C$ in $G[V']$.
This proves that $C^+ = C \cup \{z\}$ is a minimal $k$PVC of $G[V' \cup \{z\}]$.
\end{proof}

\begin{lemma}[Spider local optimum]
\label{lemma08}
For the weighted $p$-arm spider $(H, c_H)$ defined in Lines 16--17 of \Cref{alg01}, $\OPT(H, c_H) \ge p - 1$.
\end{lemma}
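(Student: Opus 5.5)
Let $C$ be an arbitrary $k$PVC of the weighted graph $(H, c_H)$. Here $H$ is the subgraph formed by the $p$-arm spider $\{c, Q_1, \dots, Q_p\}$, with $c_H(c) = p-1$ and $c_H(v) = 1$ for every arm-vertex. The plan is to show $c_H(C) \ge p-1$ directly, by a two-case analysis on whether the center $c$ belongs to $C$. Since $C$ is arbitrary, taking the minimum over all such $C$ gives $\OPT(H, c_H) \ge p-1$.

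\emph{Case 1: $c \in C$.} Then $c_H(C) \ge c_H(c) = p-1$, because all weights are nonnegative.

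\emph{Case 2: $c \notin C$.} Here we apply \Cref{lemma01} to the spider $H$ itself. The $k$-paths it uses are built from two arms together with the center, and they lie entirely inside $H$. So any $k$PVC $C$ of $H$ must contain at least one vertex from each of at least $p-1$ distinct arms. The arms are pairwise vertex-disjoint apart from $c$, so these vertices are distinct arm-vertices. Each has weight $1$, hence $c_H(C) \ge p-1$.

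The only point needing care is that \Cref{lemma01} applies to a $k$PVC of the spider graph $H$ rather than of $G$. We also need that two arms plus the center really form a path on at least $k$ vertices, which therefore contains a $k$-path. For odd $k = 2m+1$, two arms of length $m$ plus $c$ give $2m+1 = k$ vertices. For even $k = 2m$, every arm except possibly the first has length $m$. Two arms of lengths $m-1$ and $m$ plus $c$ give $2m = k$ vertices, and two arms of length $m$ give $2m+1$ vertices, which contain a $k$-path. If the shorter arm is involved, the argument of \Cref{lemma01} still goes through, since any two arms together with $c$ contain a $k$-path. With these observations the two cases are exhaustive, and no real obstacle remains.
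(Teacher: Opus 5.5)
Your proof is correct and follows the paper's two-case argument: if $c \in C$, the center alone contributes weight $p-1$; if $c \notin C$, \Cref{lemma01} gives $p-1$ distinct unit-weight arm-vertices. Your explicit check that any two arms plus $c$ contain a $k$-path fills in a detail the paper's proof of \Cref{lemma01} leaves implicit, namely the even case, where two length-$m$ arms give $k+1$ vertices.
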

\begin{proof}
Let $C$ be a $k$PVC of $H$.
If the center $c \in C$, then $c_H(C) \ge c_H(c) = p-1$.

Suppose that $c \notin C$.
We conclude from \Cref{lemma01} that $C$ contains at least $p - 1$ arm-vertices each of local cost $1$.
That is, $c_H(C) \ge p - 1$.
This proves the lemma.
\end{proof}

\begin{lemma}[Spider local upper bound]
\label{lemma09}
For the weighted maximal-arm spider $(H, c_H)$ defined in Lines 16--17 of \Cref{alg01},
every minimal $k$PVC $C$ of $G$ satisfies $c_H(C \cap V(H)) \le c_H(V(H)) - 1$.
\end{lemma}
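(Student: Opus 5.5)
It suffices to show that a minimal $k$PVC $C$ of $G$ cannot contain all vertices of $V(H)$. Indeed, if some vertex $v \in V(H)$ lies outside $C$, then $c_H(C \cap V(H)) \le c_H(V(H)) - c_H(v)$. Here $c_H(v) \ge 1$: either $c_H(v) = 1$ for an arm-vertex, or $c_H(v) = p - 1 \ge 2$ for the center, because $p \ge 3$. This gives the claimed bound $c_H(V(H)) - 1$.

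So suppose, for a contradiction, that $V(H) \subseteq C$. Since $C$ is minimal, \Cref{lemma03} gives a witness path for the center: a $k$-path $P$ of $G$ with $V(P) \cap C = \{c\}$. In particular, $P$ passes through $c$ and contains no arm-vertex of $H$, since all arm-vertices are in $C$. We will show that $P$ then yields a new arm for $H$, contradicting the maximality of $H$ from \Cref{def02}. This is the argument from \Cref{lemma02}, now applied inside the minimality framework.

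The only step requiring care is extracting an $(m+1)$-path that ends at $c$ from the $k$-path $P$. Write $P$ with $c$ at position $j$, so the two sides of $c$ have $j - 1$ and $k - j$ further vertices. Their total is $k - 1$, which is $2m$ when $k = 2m+1$ and $2m - 1$ when $k = 2m$. When $k$ is odd, one side has at least $m$ vertices. When $k$ is even, one side has at least $m$ vertices by pigeonhole, since $2m - 1 > 2(m-1)$. Taking $c$ together with the $m$ vertices of $P$ immediately following it on that side gives a path on $m+1$ vertices, i.e.\ of length $m$, ending at $c$. It uses only vertices of $P$, so it contains no arm-vertex of $H$. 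This is exactly a new arm in the sense of \Cref{def02} (the arm length is $m$ in both parities), contradicting the maximality of $H$.

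Hence $V(H) \not\subseteq C$ for every minimal $k$PVC $C$ of $G$, and the bound follows from the first paragraph. Note that minimality is used only through the witness path of $c$. The positivity of weights, which \Cref{lemma02} needed to justify that the center can be dropped, is not needed here.
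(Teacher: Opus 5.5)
Your proof is correct and follows essentially the paper's route. The paper invokes \Cref{lemma02}: a $k$-path through $c$ that avoids all arm-vertices would yield a new arm, so any $k$PVC containing $V(H)$ can drop $c$ and is not minimal. You run the same argument contrapositively via the witness path of $c$ from \Cref{lemma03}, and you spell out the extraction of an $(m+1)$-path ending at $c$ in both parities, a step the paper only asserts; your remark that positivity of the weights is not needed is also correct.
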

\begin{proof}
Note that the vertices of $H$ have positive weights of at least $1$.
By \Cref{lemma02}, $C$ does not contain all the vertices of $H$, and thus $c_H(C \cap V(H)) \le c_H(V(H)) - 1$.
\end{proof}

\begin{lemma}[Good spider layer]
\label{lemma10}
The weighted maximal-arm spider $(H, c_H)$ defined in Lines 16--17 of \Cref{alg01} is $\rho$-good with
$\rho = \begin{cases}
	\frac {3k - 1}4,	&k \mbox{ is odd};\\
	\frac {3k}4,			&k \mbox{ is even}.
	\end{cases}$
\end{lemma}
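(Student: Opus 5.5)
We combine \Cref{lemma08} and \Cref{lemma09}. By \Cref{def04}, it suffices to show that for every minimal $k$PVC $C$ of $G$,
\[
c_H(C \cap V(H)) \le \rho \cdot \OPT(H, c_H).
\]
\Cref{lemma08} gives the lower bound $\OPT(H, c_H) \ge p-1$. \Cref{lemma09} gives the upper bound $c_H(C \cap V(H)) \le c_H(V(H)) - 1$. It therefore suffices to prove
\[
c_H(V(H)) - 1 \le \rho (p-1) \quad \text{for every } p \ge 3 .
\]

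First compute $c_H(V(H))$. The center contributes $p-1$ and every arm-vertex contributes $1$.

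For odd $k = 2m+1$, all $p$ arms have $m$ arm-vertices, so
\[
c_H(V(H)) - 1 = pm + p - 2 = p(m+1) - 2 .
\]
For even $k = 2m$, one arm of the initial $T_k$ has $m-1$ arm-vertices. The other two arms, and every appended arm, have $m$ arm-vertices, since appended arms are $(m+1)$-paths ending at $c$. This gives
\[
c_H(V(H)) - 1 = pm - 1 + p - 2 = p(m+1) - 3 .
\]

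Next verify the inequality by viewing both sides as linear functions of $p$.

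For odd $k$ we have $\rho = \frac{3k-1}{4} = \frac{3m+1}{2}$, so the inequality reads $p(m+1) - 2 \le \frac{3m+1}{2}(p-1)$. Rearranged, this is
\[
p\,\frac{m-1}{2} \ge \frac{3(m-1)}{2},
\]
which holds for all $p \ge 3$ when $m \ge 1$, with equality at $p=3$.

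For even $k$ we have $\rho = \frac{3k}{4} = \frac{3m}{2}$, and we need $p(m+1) - 3 \le \frac{3m}{2}(p-1)$. This is equivalent to
\[
p\,\frac{m-2}{2} \ge \frac{3(m-2)}{2},
\]
which holds for $p \ge 3$ since $m \ge 2$.

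The only delicate point is checking that the bound is tightest at $p=3$. That is, the ratio $\frac{p(m+1)-2}{p-1}$ must be non-increasing in $p$. This is exactly the rearranged inequality above, and it needs the sign condition $m \ge 1$ (odd case) or $m \ge 2$ (even case). Both conditions hold for $k \ge 3$.
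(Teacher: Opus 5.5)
Your proof is correct and follows the paper's own argument: combine \Cref{lemma08} and \Cref{lemma09}, compute $c_H(V(H)) - 1 = pm + p - 2$ (odd $k$) or $pm + p - 3$ (even $k$), and check that the resulting bound is worst at $p = 3$. The paper does the last step by writing $\frac{pm+p-2}{p-1} = m + 1 + \frac{m-1}{p-1} \le m + 1 + \frac{m-1}{2}$ (and analogously with $m-2$ for even $k$), which is equivalent to your linear rearrangement $(m-1)p \ge 3(m-1)$, resp.\ $(m-2)p \ge 3(m-2)$.
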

\begin{proof}
Let $C$ be a minimal $k$PVC of $G$.
By Lemmas~\ref{lemma08} and~\ref{lemma09}, and $p \ge 3$, when $k$ is odd,
\[
\frac {c_H(C \cap V(H))}{\OPT(H, c_H)} \le \frac {pm + p - 2}{p - 1} = m + 1 + \frac {m - 1}{p - 1}
	\le m + 1 + \frac {m - 1}2 = \frac {3m + 1}2 = \frac {3k - 1}4.
\]
When $k$ is even,
\[
\frac {c_H(C \cap V(H))}{\OPT(H, c_H)} \le \frac {pm + p - 3}{p - 1} = m + 1 + \frac {m - 2}{p - 1}
	\le m + 1 + \frac {m - 2}2 = \frac {3m}2 = \frac {3k}4.
\]
That is, $(H, c_H)$ is $\rho$-good with the defined $\rho$.
\end{proof}

\begin{theorem}[Local-ratio theorem for {\sc LRE}]
\label{thm03}
Let $\rho \ge 1$.
If every local weighted maximal-arm spider $(H, c_H)$ defined in Lines 16--17 of \Cref{alg01} is $\rho$-good in the current graph
and ${\rm ALG}$ in Line 14 solves spider-free graphs exactly,
then {\sc LRE}$(G, w)$ returns a minimal $k$PVC $C$ of $G$ with weight $w(C) \le \rho \cdot \OPT(G, w)$.
\end{theorem}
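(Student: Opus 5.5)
We argue by strong induction on the number of recursive calls made by {\sc LRE}$(G, w)$, or equivalently on the pair (number of vertices, number of positively weighted vertices). Each call either removes at least one vertex (Lines 1--12) or, by the choice of $\lambda$, creates at least one new zero-weight vertex inside $V(H)$ (Lines 16--20), so the recursion terminates. The induction proves two claims simultaneously: the returned $C$ is a minimal $k$PVC of the current graph, and $w(C) \le \rho \cdot \OPT(G, w)$.

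\emph{Base case} (Line 14, $G$ spider-free with all weights positive). ${\rm ALG}(G, w)$ is optimal, so $w(C) = \OPT(G,w) \le \rho \OPT(G,w)$ because $\rho \ge 1$. It is also minimal: if some $v \in C$ could be dropped, then $C \setminus \{v\}$ would be a $k$PVC of weight $w(C) - w(v) < w(C)$, since $w(v) > 0$, contradicting optimality.

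\emph{Zero-removal case} (Lines 1--12). By induction, the recursive call returns a minimal $k$PVC $C'$ of $G' = G[V \setminus Z]$ with $w(C') \le \rho \OPT(G', w)$. Applying \Cref{lemma07} once per restored vertex shows that the final $C$ is a minimal $k$PVC of $G$. Every added vertex lies in $Z$ and has weight $0$, so $w(C) = w(C')$. Moreover, restricting an optimal cover of $G$ to $V \setminus Z$ gives a $k$PVC of $G'$ of no larger weight, so $\OPT(G', w) \le \OPT(G, w)$.

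\emph{Layering case} (Lines 16--20). The recursive call {\sc LRE}$(G, \bar w)$ runs on the same graph with weight $\bar w$. By induction it returns a minimal $k$PVC $C$ of $G$ with $\bar w(C) \le \rho \OPT(G, \bar w)$. Write $w = \bar w + \lambda c_H$, extending $c_H$ by $0$ outside $V(H)$. Since $C$ is a minimal $k$PVC of $G$ itself and $(H, c_H)$ is $\rho$-good in $G$ by hypothesis, we get $c_H(C) = c_H(C \cap V(H)) \le \rho \OPT(H, c_H)$. For any $k$PVC $C^*$ of $G$, the set $C^* \cap V(H)$ is a $k$PVC of $H = G[V(H)]$, so
\[
\OPT(G, w) \ge \OPT(G, \bar w) + \lambda \OPT(H, c_H).
\]
Combining these,
\[
w(C) = \bar w(C) + \lambda c_H(C) \le \rho\bigl(\OPT(G,\bar w) + \lambda \OPT(H, c_H)\bigr) \le \rho \OPT(G, w).
\]

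The main point needing care is that $\rho$-goodness is applied to the \emph{same} graph $G$ in which $C$ is minimal. This holds because the layering step does not delete vertices; the graph changes only in the subsequent zero-removal call. The recursive structure therefore has to be set up as above: the inductive hypothesis gives minimality in exactly the graph where the layer was defined, and \Cref{lemma07} carries minimality back through every zero-removal step.
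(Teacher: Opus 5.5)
Your proof is correct and uses the same local-ratio argument as the paper. You organize it as an induction over the recursion, while the paper unrolls the decomposition $w = w_t + \sum_i \lambda_i c_{H_i}$; the ingredients are the same (\Cref{lemma07} for minimality, $\rho$-goodness applied in the graph where the layer was cut, and $C^* \cap V(H)$ being a $k$PVC of $H$), and your explicit checks of termination and of the minimality of ${\rm ALG}$'s output (from positivity of the remaining weights) fill in steps the paper uses implicitly.
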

\begin{proof}
Denote $(G_0, w_0) = (G, w)$ the original graph,
and $G_i$ denotes the subgraph induced by the vertices having positive residual weight after the $i$-th layering step,
and let $w_i$ be the corresponding residual weight function restricted to $V(G_i)$ (that is, for every $v \notin V(G_i)$, $w_i(v) = 0$).
The local weighted maximal-arm spider $(H_i, c_{H_i})$ of the $i$-th layer is selected from the graph $G_{i-1}$ with coefficient $\lambda_i$,
for $i = 1, 2, \dots, t$, and terminates at the residual spider-free graph $(G_t, w_t)$.

Let $C$ be the $k$PVC returned by {\sc LRE}, and let $C^*$ be an optimum $k$PVC of $(G, w)$.
For every layer $i$, let $\OPT(H_i, c_{H_i})$ denote the minimum-weight of the $k$PVC of $H_i$.
Since $C^* \cap V(H_i)$ is a $k$PVC of $H_i$ and $c_{H_i}$ is restricted to $V(H_i)$, we have
\[
c_{H_i}(C^*) = c_{H_i}(C^* \cap V(H_i)) \ge \OPT(H_i, c_{H_i}), \ i = 1, 2, \dots, t.
\]
Similarly, $C^* \cap V(G_t)$ is a $k$PVC of the spider-free graph $G_t$, and hence
\[
w_t(C^*) = w_t(C^* \cap V(G_t)) \ge \OPT(G_t, w_t).
\]
Therefore,
\[
\OPT(G, w) = w(C^*) = w_t(C^*) + \sum_{i = 1}^t \lambda_i \cdot c_{H_i}(C^*) \ge \OPT(G_t, w_t) + \sum_{i = 1}^t \lambda_i \cdot \OPT(H_i, c_{H_i}).
\]

Recall that the returned $C$ is formed by ${\rm ALG}(G_t, w_t)$, followed by reverse extension that adds vertices from $V(G_{i-1}) \setminus V(G_i) \subseteq V(H_i)$ into $C$,
sequentially for $i = t, t-1, \dots, 1$.
For every layer $i$, consider the moment in the reverse-extension process when all vertices of $G_{i-1}$ have been restored.
By Lemma~\ref{lemma07}, the solution at this moment is a minimal $k$PVC of $G_{i-1}$.
Moreover, subsequent reverse-extension steps never remove any vertex from the solution.
Therefore, the solution at this moment is exactly $C \cap V(G_{i-1})$.
That is, $C \cap V(G_{i-1})$ is a minimal $k$PVC of $G_{i-1}$.
Since $(H_i, c_{H_i})$ is $\rho$-good in $G_{i-1}$,
\[
c_{H_i}(C) = c_{H_i}(C \cap V(H_i)) \le \rho \cdot \OPT(H_i, c_{H_i}).
\]

Using $w_t(C) = w_t({\rm ALG}(G_t, w_t)) = \OPT(G_t, w_t)$ and $\rho \ge 1$, we have
\[
w(C) = w_t(C) + \sum_{i = 1}^t \lambda_i \cdot c_{H_i}(C)
	\le \OPT(G_t, w_t) + \rho \cdot \sum_{i = 1}^t \lambda_i \OPT(H_i, c_{H_i})
	\le \rho \cdot \OPT(G, w).
\]
This proves the theorem.
\end{proof}

\begin{lemma}
\label{lemma11}
Given a vertex non-negatively weighted graph $G = (V, E)$, the {\sc LRE} algorithm runs in $O(n^{h(k)} \cdot {\rm poly}(k) + n \cdot f(k))$ time,
where $n = |V|$ and $h(k)$, ${\rm poly}(k)$ and $f(k)$ are functions depending only on $k$.
\end{lemma}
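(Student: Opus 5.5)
We bound three contributions separately: the number of layering iterations, the cost of each iteration, and the single call to ${\rm ALG}$ at the boundary.

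\emph{Number of iterations.} Each layering step in Lines 16--20 chooses $\lambda$ as the largest feasible value. Hence at least one vertex of $H$ gets residual weight $0$, and it is removed in Lines 1--4 of the next recursive call. Removed vertices never return before the reverse extension phase. So the recursion depth, counting both the layering calls and the zero-removal calls, is $O(n)$. The reverse extension is also bounded: each restored vertex $z$ is processed once in the while loop, so there are at most $n$ restorations in total.

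\emph{Cost of one layering step.} First we test whether $G$ contains $T_k$ as a subgraph. Since $|V(T_k)| \le \frac{3k}{2}$, we enumerate all ordered tuples of $|V(T_k)|$ distinct vertices and check the required adjacencies in ${\rm poly}(k)$ time per tuple. This costs $O(n^{3k/2}\cdot {\rm poly}(k))$. If a copy is found, we grow it greedily into a maximal-arm spider. Each growth step searches for an $(m+1)$-path ending at $c$ that avoids the current arm-vertices. Such a path has $m \le k/2$ vertices other than $c$, so enumerating candidates takes $O(n^{m}\cdot{\rm poly}(k))$ time. Each successful step adds $m$ new vertices, so there are at most $n/m$ steps, for a total of $O(n^{m+1}{\rm poly}(k))$. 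Computing $c_H$, $\lambda$ (a minimum over $|V(H)|$ ratios $w(v)/c_H(v)$) and $\bar w$ takes $O(n)$ time.

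\emph{Cost of zero-removal and reverse extension.} Each restoration checks whether $C$ is still a $k$PVC of $G'$. Only $k$-paths through $z$ can be newly uncovered, so it suffices to enumerate the $O(n^{k-1}\cdot k)$ placements of $k$-paths through $z$ in $G'-C$, each checked in ${\rm poly}(k)$ time. This costs $O(n^{k}{\rm poly}(k))$ per restoration, and $O(n^{k+1}{\rm poly}(k))$ over all restorations.

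\emph{Boundary case.} The boundary call runs once on a spider-free graph and costs $n\cdot f(k)$ by \Cref{thm02}. This already includes the linear-time path decomposition construction via Bodlaender's algorithm and \Cref{lemma06}.

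Summing, the recursion contributes $O(n)$ iterations times $O(n^{\max(3k/2,\,k)}{\rm poly}(k))$ per iteration, plus $O(n^{k+1}{\rm poly}(k))$ for restorations, plus $n\cdot f(k)$. This gives $O(n^{h(k)}{\rm poly}(k) + n f(k))$ with, for instance, $h(k)=\frac{3k}{2}+1$.

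The only step needing care is making sure the greedy arm growth and the $k$PVC checks stay polynomial with an exponent depending only on $k$. Both reduce to bounded-size subgraph enumeration, so I expect no genuine obstacle, just bookkeeping of the exponents.
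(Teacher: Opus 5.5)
Your proposal is correct and follows essentially the same accounting as the paper: at most $n$ layers, each dominated by the $O(n^{3k/2}\cdot{\rm poly}(k))$ search for $T_k$; greedy arm growth in $O(n^{m+1}\cdot{\rm poly}(k))$; at most $n$ restorations, each checked by bounded-length path enumeration; and a single $n\cdot f(k)$ call to the dynamic program of \Cref{thm02}, yielding $h(k)=\frac{3k}{2}+1$. The only difference is that you restrict the feasibility check to $k$-paths through the restored vertex $z$ (your own count then gives $O(n^{k-1}\cdot{\rm poly}(k))$ per restoration, better than the $O(n^{k})$ you state), whereas the paper enumerates all $k$-paths of the restored graph; this does not change the final bound.
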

\begin{proof}
A three-arm spider $T_k$ contains at most $\frac 32 k$ vertices interconnected in a specific structure,
and thus it can be located, if exists, in $O(n^{\frac 32 k} \cdot {\rm poly}(k))$ time.
Afterwards, since enumerating all possible arms at the center vertex in the subgraph induced by the vertices outside of the spider takes
$O(\frac k2 \cdot n^{\frac k2})$ time, and there are $O(n^{\frac k2})$ of them,
greedily expanding $T_k$ into a maximal-arm spider $H$, in which there are at most $\frac {2n}k$ arms,
takes $O(\frac k2 \cdot n^{\frac k2} \cdot {\frac {2n}k}) = O(n^{\frac k2 + 1})$ time.

Given a maximal-arm spider $H$, setting up the local cost function $c_H$, determining the parameter $\lambda$,
and updating the vertex weight function $\bar{w}$ takes $O(n + k)$ time.
It follows that every layer takes an overall time of $O(n^{\frac 32 k} \cdot {\rm poly}(k))$,
and it leaves at least a zero-weight vertex in the residual graph.
That is, there are at most $n$ layers.

During the reverse extension, when a zero-weight vertex is restored, the feasibility of the current deletion set $C$ can be tested
by enumerating all $k$-paths in the restored graph, which takes $O(n^k \cdot {\rm poly}(k))$ time.
Since there are at most $n$ restorations, the overall reverse extension time is in $O(n^{k+1} \cdot {\rm poly}(k))$.

Recall from \Cref{thm02} that the dynamic programming algorithm produces an exact solution for a spider-free graph, which is a subgraph of $G = (V, E)$,
in $O(n \cdot f(k))$ time, where $f(k)$ is a function depending only on $k$.
Summing up, a coarse running time bound for the LRE is $O(n^{\frac 32 k + 1} \cdot {\rm poly}(k) + n \cdot f(k))$,
where ${\rm poly}(k)$ is a polynomial function in $k$ and $f(k)$ is an exponential function depending only on $k$.
We remark that if $k$ is part of the input, then the running time is not necessarily polynomial any more.
\end{proof}

Combining \Cref{lemma10}, \Cref{thm03}, and \Cref{lemma11}, we have achieved the conclusion stated as \Cref{thm01}, that is,
for every fixed integer $k \ge 3$, the weighted $k$PVC problem admits a polynomial-time $\rho_k$-approximation algorithm, where
\[
\rho_k = \begin{cases}
	\frac {3k - 1}4, & k \mbox{ is odd};\\
	\frac {3k}4,     & k \mbox{ is even}.
	\end{cases}
\]

\section{Conclusion}\label{sec04}
We studied the weighted $k$-Path Vertex Cover problem from the perspective of approximation algorithm, where the input graph is vertex non-negatively weighted.
We observed a maximal-arm spider that can be approximated within a factor of $\frac {3k}4$, and thus designed it into layers in the Layering and Reverse Extension framework.
Furthermore, we proposed a polynomial-time exact dynamic programming algorithm that solves the boundary case of spider-free graphs,
based on a path decomposition of bounded width.
Using the local-ratio technique, we were able to prove that the performance ratio of our approximation algorithm is no more than $\frac {3k}4$.
The improvement in the approximation ratio is huge considering the previous best ratio being $k - \frac 12$, which was improved from an earlier ratio of $k$.
Notably, the running time of both the dynamic programming algorithm, and the Layering and Reverse Extension algorithm, is exponential in $k$, and thus requires $k$ to be a fixed constant.
It would be interesting to see whether the approximation ratio lower bound is $O(k)$ or not,
and if not, how one can improve the ratio into $o(k)$.


\end{document}